\newif\ifSUBMIT
\SUBMITtrue
\SUBMITfalse

\newif\ifDRAFT
\DRAFTtrue
\DRAFTfalse

\ifSUBMIT
\documentclass[a4paper,review,fleqn]{cas-sc}
\usepackage{setspace}
\else
\documentclass[a4paper,final,fleqn]{cas-sc}
\ExplSyntaxOn
\cs_gset:Npn \__first_footerline:
  { \group_begin: \small \sffamily \__short_authors: \group_end: }
\ExplSyntaxOff
\fi

\usepackage[authoryear,longnamesfirst]{natbib}

\def\tsc#1{\csdef{#1}{\textsc{\lowercase{#1}}\xspace}}
\tsc{WGM}
\tsc{QE}
\usepackage{mathtools}
\usepackage{xspace}
\usepackage{etoolbox}

\usepackage{amsthm}      % provides theorem + proof
\usepackage{aliascnt}
\usepackage{cleveref}

\newtheorem{theorem}{Theorem}

\newaliascnt{lemma}{theorem}
\newtheorem{lemma}[lemma]{Lemma}
\aliascntresetthe{lemma}
\crefname{lemma}{Lemma}{Lemmas}

\newaliascnt{claim}{theorem}

\aliascntresetthe{claim}
\crefname{claim}{Claim}{Claims}

\newaliascnt{corollary}{theorem}
\newtheorem{corollary}[corollary]{Corollary}
\aliascntresetthe{corollary}
\crefname{corollary}{Corollary}{Corollaries}

\newaliascnt{proposition}{theorem}

\aliascntresetthe{proposition}
\crefname{proposition}{Proposition}{Propositions}

\newaliascnt{fact}{theorem}

\aliascntresetthe{fact}
\crefname{fact}{Fact}{Facts}

\newaliascnt{construction}{theorem}

\aliascntresetthe{construction}
\crefname{construction}{Construction}{Constructions}

\newaliascnt{conjecture}{theorem}

\aliascntresetthe{conjecture}
\crefname{conjecture}{Conjecture}{Conjectures}

\newaliascnt{definition}{theorem}
\newtheorem{definition}[definition]{Definition}
\aliascntresetthe{definition}
\crefname{definition}{Definition}{Definitions}

\newaliascnt{remark}{theorem}

\aliascntresetthe{remark}
\crefname{remark}{Remark}{Remarks}

\newaliascnt{observation}{theorem}
\newtheorem{observation}[observation]{Observation}
\aliascntresetthe{observation}
\crefname{observation}{Observation}{Observations}

\newcommand{\problem}[1]{\ensuremath{\textsc{#1}}\xspace}

\newcommand{\Collision}{\problem{Collision}}
\newcommand{\NestedCollision}{\problem{NestedCollision}}

\newcommand{\class}[1]{\ensuremath{\mathsf{#1}}\xspace}
\newcommand{\TFNP}{\class{TFNP}}

\newcommand{\PPAD}{\class{PPAD}}
\newcommand{\PPA}{\class{PPA}}
\newcommand{\PPP}{\class{PPP}}
\newcommand{\PWPP}{\class{PWPP}}
\newcommand{\NP}{\class{NP}}
\newcommand{\coNP}{\class{coNP}}

\newcommand{\PLS}{\class{PLS}}

\newcommand{\B}{\{0,1\}}
\newcommand{\N}{\mathbb{N}}

\newcommand{\poly}{\operatorname{poly}}

\newcommand{\Tainted}{\ensuremath{\mathsf{Tainted}(f,g)}\xspace}

\ifSUBMIT
\makeatletter

\let\CAS@orig@begintheorem\@begintheorem
\renewcommand{\@begintheorem}[2]{%
  \CAS@orig@begintheorem{#1}{#2}%
  \doublespacing
}

\let\CAS@origproof\proof
\let\CAS@origendproof\endproof
\renewenvironment{proof}{\CAS@origproof\doublespacing}{\CAS@origendproof}

\makeatother
\fi

\begin{document}
\let\WriteBookmarks\relax
\def\floatpagepagefraction{1}
\def\textpagefraction{.001}

% Short title
\shorttitle{Black-Box PWPP Is Not Turing-Closed}

% Short author
\shortauthors{P. Hubáček}

% Main title of the paper
\title [mode = title]{Black-Box PWPP Is Not Turing-Closed}

\ifDRAFT
% Title footnote mark
% eg: \tnotemark[1]
\tnotemark[1]

% Title footnote 1.
% eg: \tnotetext[1]{Title footnote text}
\tnotetext[1]{DRAFT -- PLEASE DO NOT DISTRIBUTE!}
\fi

% First author
%
% Options: Use if required
% eg: \author[1,3]{Author Name}[type=editor,
%       style=chinese,
%       auid=000,
%       bioid=1,
%       prefix=Sir,
%       orcid=0000-0000-0000-0000,
%       facebook=<facebook id>,
%       twitter=<twitter id>,
%       linkedin=<linkedin id>,
%       gplus=<gplus id>]

\author[1,2]{Pavel Hubáček}
[ orcid=0000-0002-6850-6222
]

% Corresponding author indication
\cormark[1]

% Footnote of the first author
% \fnmark[1]

% Email id of the first author
\ead{hubacek@math.cas.cz}

% URL of the first author
% \ead[url]{}

% Credit authorship
% eg: \credit{Conceptualization of this study, Methodology, Software}
% \credit{}

% Address/affiliation
\affiliation[1]{organization={Institute of Mathematics, Czech Academy of Sciences},
addressline={Žitná 609/25},
city={Praha 1},
% %          citysep={}, % Uncomment if no comma needed between city and postcode
postcode={110 00},
% state={},
country={Czech Republic}}

% \author[2]{}%[]

% Footnote of the second author
% \fnmark[2]

% Email id of the second author
% \ead{}

% URL of the second author
% \ead[url]{}

% Credit authorship
% \credit{}

% Address/affiliation
\affiliation[2]{organization={Charles University, Faculty of Mathematics and Physics},
addressline={Malostranské náměstí 25/2},
city={Praha 1},
%citysep={}, % Uncomment if no comma needed between city and postcode
postcode={118 00},
            % state={},
country={Czech Republic}}

% Corresponding author text
\cortext[1]{Corresponding author}

% Footnote text
% \fntext[1]{}

% For a title note without a number/mark
%\nonumnote{}

% Here goes the abstract
\begin{abstract}
We establish that adaptive collision-finding queries are strictly more powerful than non-adaptive ones by proving that the complexity class \PWPP (Polynomial Weak Pigeonhole Principle) is not closed under adaptive Turing reductions in the black-box setting.
Previously, \PWPP was known to be closed under non-adaptive Turing reductions (Jeřábek 2016).
We demonstrate this black-box separation by introducing the \NestedCollision problem, a natural collision-finding problem defined on a pair of shrinking functions.
We show that while this problem is solvable via two adaptive calls to a \PWPP oracle, it cannot be solved via an efficient black-box non-adaptive reduction to the canonical \PWPP-complete problem \Collision.
\end{abstract}

% Use if graphical abstract is present
%\begin{graphicalabstract}
%\includegraphics{}
%\end{graphicalabstract}

\ifSUBMIT
% Research highlights
\begin{highlights}
\item We show that the \TFNP subclass \PWPP is not closed under adaptive Turing reductions.
\item We introduce a natural nested collision-finding problem of independent interest.
\item Our separation is by a straightforward counting argument.
\end{highlights}
\fi

% Keywords
% Each keyword is seperated by \sep
\begin{keywords}
 TFNP \sep PWPP \sep Turing reduction \sep adaptive \sep non-adaptive \sep closure \sep separation
\end{keywords}

\maketitle
\ifSUBMIT
\doublespacing
\AtBeginEnvironment{proof}{\doublespacing}
\fi

% Main text
\section{Introduction}
\label{sec:Introduction}

The complexity class \TFNP \citep{MegiddoP91} captures \NP search problems with a guaranteed existence of solutions.
Since \TFNP is a semantic class and is widely believed to lack complete problems (see, e.g., \cite{Pudlak2015}), its structural landscape is mapped through syntactic subclasses such as \PLS \citep{JohnsonPY88}, \PPA, \PPAD, \PPP, and \PWPP \citep{Papadimitriou94}.
A fundamental metric for evaluating the robustness of these subclasses is their closure properties under various notions of reducibility.
In particular, the closure under Turing reductions asks whether algorithms granted multiple adaptive oracle queries to a complete problem are strictly more powerful than those restricted to a single query.

Motivated by structural questions regarding reducibility first raised by~\cite{Papadimitriou94}, the foundational work by~\cite{BussJ12} revealed a dichotomy within \TFNP.
Topological and local-search classes like \PLS, \PPA, and \PPAD natively support sequential logic and are closed under fully adaptive Turing reductions.
Conversely, classes founded upon counting arguments and pigeonhole principles are less robust in terms of closure properties.
Resolving a long-standing conjecture in this line of work,~\cite{FlemingGPR24} recently established a black-box separation proving that \PPP is not closed under Turing reductions.
The capability of a class to simulate sequential dependency is highly relevant to the algorithmic aspects of extremal combinatorics, where natural search problems like the one motivated by Ramsey's theorem inherently require iteratively applied, adaptive pigeonhole arguments.
Recent separations by~\cite{JainLRX24} regarding generalized iterated pigeonhole classes highlight the power of sequential applications of the pigeonhole principle in the context of \TFNP.

\paragraph{The curious case of \PWPP.}
The complexity class \PWPP (Polynomial Weak Pigeonhole Principle,~\citealp{Jerabek16}) occupies an intermediate regime.
Unlike \PPP, its analogue for length-preserving functions motivated by the strong pigeonhole principle, \PWPP is defined via the weak pigeonhole principle in the strictly shrinking regime.
This was leveraged by~\cite{Jerabek16} to establish that multiple independent \PWPP instances can be solved via a single call to a \PWPP oracle, proving that \PWPP is closed under non-adaptive Turing reductions.
Motivated by the work of~\cite{Korten22},~\cite{LiPT24} recently defined a subclass of \PWPP called \class{LOSSY} and showed that it is Turing-closed.
However, it has remained an open question whether \PWPP is closed under adaptive Turing reductions.
Importantly, the techniques from~\cite{FlemingGPR24} for the Turing closure of \PPP are not helpful in studying the adaptive Turing closure of \PWPP as they are tailored to establishing the strictly increasing power of additional \emph{non-adaptive} queries, whereas, in \PWPP, additional non-adaptive queries are not helpful as established by~\cite{Jerabek16}.

\paragraph{Our result.}
We show that black-box non-adaptive access to a \PWPP oracle is insufficient to simulate even two adaptive collision-finding steps.
Concretely, we define a natural search problem \NestedCollision on a pair of shrinking functions and prove that it lies in the adaptive Turing closure of \PWPP but does not admit an efficient black-box many-one reduction to the \PWPP-complete problem \Collision.
Throughout, ``black-box'' means we operate in the decision-tree model for search problems.
In this setting, the reduction accesses the input instances exclusively via shallow decision trees that query individual truth-table entries.
It consists of a forward part that computes the entries of an induced \Collision instance, and a backward part that must produce a \NestedCollision solution using only its own queries together with a provided \Collision witness. Crucially, the backward part must succeed given an arbitrary valid witness to the induced \Collision instance.
At a high level, the separation follows a simple principle. A witness for a \Collision instance given by a shallow decision-tree formulation induces a small transcript of oracle queries.
By analyzing random instances of \NestedCollision, we show that, for all large enough input lengths, there always exists a valid \Collision witness whose joint transcript misses at least one ingredient needed to assemble a full \NestedCollision solution.
Because the transcript is incomplete and the backward reduction is shallow, the reduction is forced to output a candidate solution whose correctness is not yet logically forced.
Via an adversarial argument, we then complete the unqueried entries of the functions, rendering this candidate output invalid and defeating the reduction.

Our separation between \Collision and \NestedCollision pins down the power of adaptive collision-finding queries by establishing a black-box separation between \PWPP and its adaptive Turing closure.
Our results isolate \PWPP in a unique intermediate tier among the standard \TFNP subclasses with respect to closure properties.
Namely, \PWPP is strictly more resilient than \PPP due to robustness with respect to non-adaptive Turing reductions, yet structurally different from the major part of \TFNP that is closed under adaptive Turing reductions.
Extending our result to separate the non-adaptive and adaptive Turing closures of \PPP is left for future work.

\subsection{Related Work}

\paragraph{Complete problems for \PPP and \PWPP.}
For many years, a major open question was whether the pigeonhole classes \PPP and \PWPP contained other complete problems in addition to the canonical ones defining the two classes.
\cite{SotirakiZZ18} provided the first additional complete problems for these classes, demonstrating connections to lattice-based cryptography through the constrained-SIS problem.
Exploring the internal structure of these classes further,~\cite{BanJPPR19} established several reductions between problems related to integral lattices, showing that, e.g., the computational analogue of Minkowski's theorem lies in \PPP.
Motivated by the connections between \PWPP and collision-resistant hashing,~\cite{HubacekV21} expanded the algebraic repertoire of these classes by analyzing the search complexity of the discrete logarithm problem, establishing that suitable variants of it over general groupoids exactly capture the complexity of \PPP and \PWPP.
Parallel to cryptographic reductions, a recent line of work has closely linked \PWPP and \PPP to existential guarantees in extremal combinatorics.
Building on earlier results showing the \PWPP-hardness of implicit versions of Ramsey's theorem and the Erd\H{o}s-Rado sunflower lemma \citep{Krajicek05,KomargodskiNY19},~\cite{BourneufFHRS23} showed that several other classical theorems from extremal combinatorics, such as the Erd\H{o}s-Ko-Rado theorem, Sperner's theorem, or Cayley's formula, give rise to natural complete problems for \PWPP and \PPP.
Similar characterizations were independently established by~\cite{PasarkarPY23}.
\cite{Ishizuka25} extended this line of work by showing that a search problem corresponding to K{\H{o}}nig's lemma is contained in $\PPA\cap\PPP$.
Along a different axis of generalization,~\cite{BennettGS25} studied the complexity of finding $L$-wise \emph{multicollisions}, organizing the resulting multi-pigeonhole classes and connecting them to coding and lattice problems; in contrast to our setting, their generalization varies the number of colliding elements rather than introducing adaptive, sequential collision-finding.

\paragraph{Black-box separations and proof complexity.}
Because separating \TFNP subclasses in the standard Turing machine model would imply $\mathsf{P}\neq\NP\cap\coNP$, unconditional separations are typically studied in the black-box (or relativized) setting.
The formal study of black-box reductions in \TFNP was pioneered by~\cite{BeameCEIP98}, who established a foundational framework connecting oracle separations between total search problems to lower bounds in propositional proof complexity.
In recent years, this proof-complexity lens has driven a line of work refining the \TFNP landscape.
Notably,~\cite{GoosHJMPRT24} leveraged lower bounds in systems like Nullstellensatz and Sherali-Adams to complete the picture of black-box relationships among all classical \TFNP subclasses, but there are many other works establishing separations within subclasses of \TFNP building on this rich interplay (see, e.g., \cite{BureshOppenheimM04,GargGK020,BussFI23,HubacekKT24,LiPR24}).
We note that our separation between \PWPP and its adaptive Turing closure is conceptually much simpler compared to the refinements of the \TFNP landscape based on proof complexity.

Finally,~\cite{Jerabek07} established relativized separations in bounded arithmetic comparing various extended notions of the weak pigeonhole principle. While he established that adaptive applications of the \emph{surjective} weak pigeonhole principle escape the provability of weak base theories, he notably proved that certain adaptive iterations of the \emph{injective} weak pigeonhole principle (i.e., the proof-complexity analogue of \PWPP) are provable.
Importantly, his proof-theoretic separations are unrelated to the separation between the adaptive and non-adaptive Turing closure of \PWPP, and, in a sense, serve as an additional motivation for our result.

\section{Preliminaries}
\label{sec:Preliminaries}
\begin{definition}[\TFNP and many-one reductions]
A poly-time relation $R\subseteq\B^*\times\B^*$ is in \class{FNP} if there is a polynomial $p$ such that $\forall x, y\in\B^*, (x,y) \in R \Rightarrow |y| \leq p(|x|).$
A relation $R \in \class{FNP}$ is in \TFNP if it satisfies
$\forall x\in\B^* \exists y\in\B^*, (x,y) \in R$.

We say that $A \in \TFNP$ \emph{many-one reduces} to $B \in \TFNP$ (written also as $A \leq B$) if there exist two poly-time computable functions $f$ and $g$ satisfying
$\forall x, y \in \{0, 1\}^*, (f(x), y) \in B \Rightarrow (x, g(x,y)) \in A.$
In contrast, $A$ is \emph{Turing reducible} to $B$ (written also as $A \leq_T B$) if there exists a polynomial-time oracle Turing machine that solves $A$ by making multiple, sequential queries to an oracle for $B$, where the input to each query can depend on the answers received from previous ones.
\end{definition}

\begin{definition}[\Collision and \PWPP]
\label{def:Collision-PWPP}
The search problem \Collision is defined via the following relation of instances and solutions.
\begin{description}
\item[Instance:] A Boolean circuit $C\colon\B^n\to\B^{n-1}$.
\item[Solution:] A pair of inputs $x_1, x_2 \in \{0,1\}^n$ such that $x_1 \neq x_2$ and $C(x_1) = C(x_2)$.
\end{description}
The class of all total search problems many-one reducible to \Collision is called \PWPP.
\end{definition}

\section{Nested Collision-Finding}
\label{sec:Nested-Collision}

Throughout this section and the lower-bound section, we identify \(\B^{kn}\) with \((\B^n)^k\) in the natural blockwise way whenever named \(n\)-bit blocks are used.
Thus, for example, \(z=(x,y)\in\B^{2n}\) means \(x,y\in\B^n\), and \((x,y,u,v)\in\B^{4n}\) denotes the corresponding four-block string.

\begin{definition}[\NestedCollision]
\label{def:Nested-Collision}
The search problem \NestedCollision is defined via the following relation of instances and solutions.
\begin{description}
\item[Instance:] A pair of Boolean circuits
$f: \B^n \to \B^{n-1}$
and
$g: \B^{2n} \times \B^n \to \B^{n-1}$.
\item[Solution:] A tuple $(x, y, u, v)\in\B^{4n}$ such that the following hold:
\begin{enumerate}
\item $x \neq y$ and $f(x)=f(y)$, and
\item $u \neq v$ and $g(x,y,u) = g(x,y,v)$.
\end{enumerate}
\end{description}
\end{definition}

To simplify notation, for $z=(x,y)\in\B^{2n}$, we denote by $g_{x,y}(\cdot)=g(x,y,\cdot)=g_z(\cdot)=g(z,\cdot)\colon\B^{n}\to\B^{n-1}$ the restriction of $g$ obtained by fixing the first two coordinates to $z$.
The \NestedCollision problem asks to find a collision $z=(x,y)$ in $f$ and a collision in the restriction $g_z$.
Since both functions are shrinking, such a pair of collisions always exists and \NestedCollision is total.
Notice that \NestedCollision is clearly in the adaptive Turing closure of \PWPP.
Given an instance $(f,g)$ of \NestedCollision, we can solve it by two adaptive calls to a $\PWPP$ oracle by first finding a non-trivial collision $(x,y)$ in $f$ and then a non-trivial collision $(u,v)$ in $g_{x,y}$.

\begin{observation}
$\NestedCollision\le_T\Collision$.
\end{observation}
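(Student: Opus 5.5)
The reduction is the two-query adaptive algorithm sketched just before the statement; what remains is to spell out the oracle machine and check that each query is a valid \Collision instance built in polynomial time. On input an instance $(f,g)$ of \NestedCollision, with circuits $f\colon\B^n\to\B^{n-1}$ and $g\colon\B^{2n}\times\B^n\to\B^{n-1}$, the machine first queries the \Collision oracle on the circuit $f$. This is already a legal instance of \Collision in the sense of \Cref{def:Collision-PWPP}, since $f$ maps $n$ bits to $n-1$ bits. The oracle returns a pair $(x,y)$ with $x\neq y$ and $f(x)=f(y)$. The machine can recheck this in polynomial time by evaluating $f$ twice, although the oracle's guarantee makes the check unnecessary.

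The adaptive step comes next. Using the answer $z=(x,y)$, the machine builds the restricted circuit $g_z=g(x,y,\cdot)\colon\B^n\to\B^{n-1}$. It does this by hardwiring the $2n$ bits of $z$ into the first $2n$ input wires of $g$, that is, by substituting constants, which possibly lets some gates be simplified. The resulting circuit has size at most $|g|$, has $n$ inputs and $n-1$ outputs, and is computed in time linear in $|g|+n$. It is therefore again a legal \Collision instance. The second query depends on the first answer, which is exactly why the reduction is adaptive and not many-one. The oracle returns $(u,v)$ with $u\neq v$ and $g_z(u)=g_z(v)$, which means $g(x,y,u)=g(x,y,v)$.

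The machine outputs $(x,y,u,v)\in\B^{4n}$. Condition 1 of \Cref{def:Nested-Collision} holds by the first oracle answer and condition 2 by the second, so the output is a valid solution. The machine makes two queries and performs only polynomial-time work between them, so it is a polynomial-time oracle Turing machine witnessing $\NestedCollision\le_T\Collision$.

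There is no real obstacle here. The only point needing a line of care is that hardwiring $z$ into $g$ keeps the shrinking shape $n\to n-1$ required by \Collision, and this is immediate from the type of $g$.
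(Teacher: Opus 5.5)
Your proposal is correct and is exactly the paper's argument. You query the \Collision oracle on $f$ to obtain $(x,y)$, then query it adaptively on the hardwired slice $g_{x,y}$ to obtain $(u,v)$, and output $(x,y,u,v)$; your remarks on building $g_{x,y}$ by substituting constants and on it keeping the $n\to n-1$ shape only make explicit what the paper leaves implicit.
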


The discussion so far uses the standard circuit, or white-box, model: the functions $f$ and $g$ are given by Boolean circuits, and the above observation is an ordinary adaptive Turing reduction to the \PWPP-complete problem \Collision.
The lower bound presented below is in the black-box model.
In that setting, the same functions are represented explicitly, but the reduction, modeled by decision trees, accesses them only through evaluation queries.

\section{Nested Collision-Finding Is Strictly Harder in the Decision-Tree Model}

In this section, we show that \NestedCollision does not admit a shallow many-one decision-tree reduction to \Collision.
Here, we do not restrict the representations of the input functions by circuits.
Instead, we work with the decision-tree model formulations of the search problems, where an instance of \(\NestedCollision_n\) is identified with the full ``truth tables'' of functions
\[
f\colon \B^n\to\B^{n-1}
\qquad\text{and}\qquad
g\colon \B^{2n}\times\B^n\to\B^{n-1},
\]
and the reduction is allowed to access these functions only through oracle queries to individual table entries.
Thus, the forward and backward parts of the reduction are modeled as decision trees whose internal nodes query values of \(f(x)\) or \(g(x,y,u)\), and whose leaves output, respectively, a value of the induced \(\Collision_m\) instance and a candidate \(\NestedCollision_n\) solution.
The following is a specialization of the standard notion of many-one decision-tree reductions between total search problems to the case of reductions from \NestedCollision to \Collision.

\begin{definition}[\Collision-formulation of \NestedCollision]
\label{def:collision-formulation-of-nested-collision}
Let $n,m,d\in\N$.
A depth-$d$ $\Collision_m$-formulation of $\NestedCollision_n$ consists of two families of decision trees $\left(T_w\right)_{w\in\B^m}$ and $\left(D_{w_1,w_2}\right)_{w_1\neq w_2\in\B^m}$ such that:
\begin{description}
\item[Forward part:] For every $w\in\B^m$, $T_w$  is a decision tree of depth at most $d$, and, given an input instance $(f,g)$ of $\NestedCollision_n$, outputs a value
\(
C^{f,g}(w)\in\B^{m-1}.
\)
\item[Backward part:] For every pair $w_1\neq w_2\in\B^m$, $D_{w_1,w_2}$ is a decision tree  of depth at most $d$, and, given an input instance $(f,g)$ of $\NestedCollision_n$, outputs a tuple
\(
(x,y,u,v)\in\B^{4n}.
\)
\item[Correctness:]
For every instance $(f,g)$ of $\NestedCollision_n$ and every pair $w_1\neq w_2\in\B^m$, if
\(
C^{f,g}(w_1)=C^{f,g}(w_2),
\)
then
\(
D_{w_1,w_2}(f,g)=(x,y,u,v)
\)
is a solution to the $\NestedCollision_n$ instance $(f,g)$.
\end{description}
\end{definition}

\paragraph{Endpoint-revealing forward trees.}
We say that the forward part of a \Collision-formulation is \emph{endpoint-revealing} if, on every root-to-leaf path of every tree $T_w$, whenever the path queries an entry $g(x,y,u)$, the same path also queries the two endpoint entries $f(x)$ and $f(y)$.
The order of these endpoint queries is immaterial for the argument below.

\begin{lemma}[endpoint-revealing normal form]
\label{lem:endpoint-revealing-normal-form}
Every depth-$d$ $\Collision_m$-formulation of $\NestedCollision_n$ can be transformed into a depth-$3d$ $\Collision_m$-formulation of $\NestedCollision_n$ whose forward part is endpoint-revealing and that computes the same induced \Collision instance.
Moreover, this transformation leaves all backward trees unchanged.
\end{lemma}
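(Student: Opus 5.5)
The plan is a purely local rewriting of each forward tree $T_w$. Along every path, each $g$-query is preceded by the two $f$-queries for its endpoints. The backward trees are left exactly as they are.

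\emph{Construction.} Fix $w\in\B^m$ and build $T'_w$ from $T_w$ by recursion on the nodes. At an internal node of $T_w$ that queries $f(x)$, $T'_w$ queries $f(x)$ as well and branches identically. At an internal node that queries $g(x,y,u)$, $T'_w$ first inserts a query to $f(x)$ and then a query to $f(y)$. Whatever the two answers are, all branches of these inserted queries lead to the same node, which queries $g(x,y,u)$ and then continues exactly as $T_w$ does on each answer. Leaves keep their labels. (If $x=y$, one inserted query suffices, but inserting two is harmless. Repeated queries on a path are also harmless.)

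\emph{Verification.} There are three things to check.
\begin{enumerate}
\item \emph{Depth.} Each original query on a root-to-leaf path of $T_w$ is replaced by at most three queries. Since $T_w$ has depth at most $d$, $T'_w$ has depth at most $3d$.
\item \emph{Same instance.} The inserted $f$-queries do not affect which branch is taken next. So for any instance $(f,g)$, the path followed in $T'_w$ visits the original queries in the same order with the same answers as the path in $T_w$. It therefore reaches a leaf with the same label, giving $C'^{f,g}(w)=C^{f,g}(w)$ for all $w$ and all $(f,g)$.
\item \emph{Endpoint-revealing.} Every root-to-leaf path of $T'_w$ consists of the original queries of some path of $T_w$, with each $g(x,y,u)$-query immediately preceded by queries to $f(x)$ and $f(y)$. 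So every such path is endpoint-revealing. The order of the endpoint queries does not matter.
\end{enumerate}

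\emph{Correctness of the new formulation.} The induced \Collision instance is unchanged and the trees $D_{w_1,w_2}$ are unchanged, both of depth at most $d\le 3d$. Hence the correctness condition of Definition~\ref{def:collision-formulation-of-nested-collision} carries over verbatim: whenever $C'^{f,g}(w_1)=C'^{f,g}(w_2)$, we also have $C^{f,g}(w_1)=C^{f,g}(w_2)$, so $D_{w_1,w_2}(f,g)$ is a solution.

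There is no real obstacle here. The only point needing care is to state the construction so that the inserted queries visibly do not alter branching: the subtree under each answer to an inserted query is a copy of the same continuation. The path-equivalence argument in item 2 is then immediate by induction on depth.
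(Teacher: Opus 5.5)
Your proposal is correct and follows the same approach as the paper. You replace each $g(x,y,u)$-node by the three queries $f(x)$, $f(y)$, $g(x,y,u)$, ignore the answers to the inserted $f$-queries, leave the backward trees untouched, and check depth, output preservation, the endpoint-revealing property and correctness just as the paper does.
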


\begin{proof}
For every node of a forward tree labeled by a query $g(x,y,u)$, replace the node by three consecutive queries $f(x)$, $f(y)$, and $g(x,y,u)$.
All outgoing edges of the two inserted $f$-query nodes lead to the same next inserted query, so their answers are ignored.
After the final $g$-query, the transformed tree continues exactly as the original tree did.
This increases the depth of each forward tree by at most a factor of $3$, while all backward trees are left unchanged.
The value output by each forward tree on every instance is unchanged.
Thus the induced map $C^{f,g}$ is unchanged, and correctness is preserved.
\end{proof}

\begin{theorem}
\label{thm:NestedCollision-vs-Collision}
Let $d,m\colon\N\to\N$ be polynomially bounded functions.
Then, for all sufficiently large $n$, $\NestedCollision_n$ admits no depth-$d(n)$ $\Collision_{m(n)}$-formulation.
\end{theorem}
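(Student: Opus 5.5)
By \Cref{lem:endpoint-revealing-normal-form}, I will assume the forward part is endpoint-revealing. Its depth is $d'=3d(n)$, still polynomial, and the backward trees keep depth $d(n)$. The idea is to build, for large $n$, an instance $(f,g)$ and a colliding pair $w_1\neq w_2$ with $C^{f,g}(w_1)=C^{f,g}(w_2)$ such that the combined information seen by $T_{w_1}$, $T_{w_2}$ and $D_{w_1,w_2}$ does not force the output $(x,y,u,v)$ to be a solution. I would then modify $(f,g)$ off the queried set so that the output fails, while the transcripts, and hence $C(w_1)=C(w_2)$ and the output, stay the same. This contradicts correctness.

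\textbf{Step 1: choose an instance with few exploitable collisions.} Take $f$ uniformly random, or fixed to be $2$-to-$1$ so that it has exactly $2^{n-1}$ collision pairs; take $g$ uniformly random. The relevant fact: for a fixed $w$, the path of $T_w$ queries at most $d'$ points of $f$. By endpoint-revealing, each $g$-query $g(x,y,u)$ comes with $f(x)$ and $f(y)$ revealed. Call a transcript \emph{tainted} if it contains some $g(x,y,u)$ with $x\neq y$ and $f(x)=f(y)$, that is, if it touches a genuine restriction $g_z$ for a collision $z$ of $f$. Counting gives the bound. The queried $f$-points of a path are chosen adaptively, but with $f$ random, each new queried point collides with an earlier one with probability at most $d'/2^{n-1}$. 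So $\Pr[T_w \text{ finds an } f\text{-collision}]\le d'^2/2^{n-1}$. By Markov, the expected number of tainted $w$ is at most $2^m d'^2/2^{n-1}$.

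\textbf{Step 2: find an untainted colliding pair.} This is the main obstacle. The untainted set $U$ has size about $2^m(1-d'^2 2^{-n+1})$, and $C$ maps into $\B^{m-1}$, so I need a collision inside $U$. Pigeonhole gives one only if $|U|>2^{m-1}$, which holds for large $n$ because $d'^2\ll 2^{n-2}$. This yields $w_1\neq w_2\in U$ with $C(w_1)=C(w_2)$.

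One might worry that $C$ could route collisions only through tainted inputs. Counting rules this out: the image of $U$ already has at most $2^{m-1}$ values, so $U$ itself must contain a collision. For such a pair, the joint forward transcript reveals at most $2d'$ values of $f$ and no $g$-value on any genuine collision restriction $g_z$. Any $f$-collision the forward paths did find was avoided by the definition of untainted.

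\textbf{Step 3: adversarial completion.} Run $D_{w_1,w_2}$ on $(f,g)$ and let $(x,y,u,v)$ be its output. Let $Q$ be the set of entries queried by $T_{w_1}$, $T_{w_2}$ and $D_{w_1,w_2}$, so $|Q|\le 2d'+d$. I construct $(f',g')$ agreeing with $(f,g)$ on $Q$ such that $(x,y,u,v)$ is not a solution. Then $C^{f',g'}(w_1)=C^{f',g'}(w_2)$ and $D$ outputs the same tuple, which contradicts correctness. There are two cases.
\begin{itemize}
\item If $f(x)$ or $f(y)$ is unqueried, or $x=y$, set $f'$ there to a fresh value so that $f'(x)\neq f'(y)$. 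This is possible since at most $|Q|$ values are fixed and $2^{n-1}>|Q|$. Fill in the rest of $f'$ arbitrarily; the $g$-entries already queried are unaffected.
\item Otherwise $f(x)=f(y)$ was revealed, necessarily by $D$, since the forward paths were untainted with respect to this collision and $g(x,y,\cdot)$ entries in $Q$ come only from $D$'s at most $d$ queries. Since $D$ queried at most $d<2$ values among $g(x,y,u),g(x,y,v)$ only if it queried both, the problem case is when both are queried and equal. To handle it, I would choose $g$ in Step 1 with each $g_z$ injective on a set large enough to avoid this, so that $g(x,y,u)=g(x,y,v)$ with $u\neq v$ never happens among queried points. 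Concretely, I would fix $g$ so that $g_z$ for each collision $z$ of $f$ is injective outside a single designated unknown pair. Then $D$ either leaves an entry unqueried, which I set to differ, or exhibits non-equality.
\end{itemize}

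The delicate part is making sure the modification in Step 3 does not invalidate the untaintedness used in Step 2. This is why I fix the $f$-values on $Q$ and only alter unqueried entries, checking that all values remain in $\B^{n-1}$ and that $f'$ and $g'$ stay shrinking. Totality is automatic in the black-box model.
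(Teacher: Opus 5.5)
\textbf{There is a genuine gap in Step 3.} Steps 1 and 2 are sound and follow the paper. Your notion of ``tainted'' is weaker than the paper's, but the endpoint-revealing property makes it implied by an $f$-collision in the transcript. So the same expectation bound and pigeonhole argument give an untainted colliding pair, and its joint forward transcript is non-witnessing.

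The problem is how Step 3 completes the instance. You fix $(f,g)$, run $D_{w_1,w_2}$ on it, and only change entries outside a set $Q$ that includes $D$'s own queries. Then nothing stops $D$ from having seen a full witness on that fixed instance. The joint forward transcript may already contain an $f$-collision $(x,y)$: for example, $f(x)$ revealed by $T_{w_1}$ and $f(y)$ by $T_{w_2}$, or both revealed by one path that never touches $g_{x,y}$. So your claim that $f(x)=f(y)$ is ``necessarily revealed by $D$'' is false, and $D$ then only needs to hit a colliding pair inside $g_{x,y}$.

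Your remedy does not work. You propose choosing $g$ so that each collision slice $g_z$ is injective outside a single designated pair. That is impossible: $g_z\colon\B^n\to\B^{n-1}$ would have to be injective on $2^n-2>2^{n-1}$ points once $n\ge 3$. More fundamentally, every slice of a shrinking $g$ has at least $2^{n-1}$ points involved in collisions. Since $D_{w_1,w_2}$ is an arbitrary fixed tree and the pair $(w_1,w_2)$ is selected as a function of the whole instance, no instance fixed in advance can guarantee that $D$ misses all of them.

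The missing idea is that the completion only has to agree with $(f,g)$ on the joint forward transcript $P=Q(T_{w_1},f,g)\cup Q(T_{w_2},f,g)$. That alone forces $C^{f',g'}(w_1)=C^{f',g'}(w_2)$ on every completion, so the answers to $D$'s queries are yours to choose. Simulate $D_{w_1,w_2}$ adversarially starting from $P$:
\begin{itemize}
\item answer every fresh $f$-query with a value not yet used by $f$;
\item answer every fresh query $g(z,u)$ with a value not yet used in the slice $g_z$.
\end{itemize}
This is possible because $|P|+d\le 2^{n-1}$. It creates no new $f$-collision and no new collision inside any slice, so the final transcript stays non-witnessing even if $P$ already contains an $f$-collision. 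Hence, for an output $(x,y,u,v)$ with $x\neq y$ and $u\neq v$, one of the two required equalities is unforced. You falsify it by giving its still-undefined entries distinct values, or note that it already fails if both entries are revealed. This is the paper's lower bound for non-witnessing partial assignments, and it should replace your Step 3.

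A minor point: if you take $f$ to be 2-to-1, it must be a random 2-to-1 function rather than a fixed one, or the Step 1 probability bound has no randomness to rest on.
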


\paragraph{Roadmap of the lower bound.}
Suppose we are given a depth-$d$ $\Collision_m$-formulation of $\NestedCollision_n$.
For an instance $(f,g)$ of $\NestedCollision_n$, the forward part $\left(T_w\right)_{w\in\B^m}$ computes a shrinking function
\(
C^{f,g}\colon \B^m \to \B^{m-1}.
\)
For an input $w\in\B^m$, evaluating $C^{f,g}(w)$ reveals a small transcript of queries to $f$ and $g$.
By \Cref{lem:endpoint-revealing-normal-form}, we may assume that the forward part is endpoint-revealing at the cost of only a constant-factor increase in depth.
Thus, whenever a transcript touches a slice $g_{x,y}$, it also reveals the two endpoint values $f(x)$ and $f(y)$.
We call a transcript \emph{tainted} if it already contains a collision in $f$.
We show that for a uniformly random instance $(f,g)$, most inputs $w$ are \emph{good} (i.e., not tainted) with overwhelming probability.
Hence, by the pigeonhole principle, there exists a \Collision witness $\{w_1,w_2\}$ consisting of two good inputs.
The endpoint-revealing property ensures that the joint transcript of two good inputs is non-witnessing.
Therefore, the backward tree $D_{w_1,w_2}$ associated with the collision witness would have to solve \NestedCollision on all completions of such a non-witnessing partial assignment.
We then show by an adversarial completion argument that no decision tree of depth $d$ can do this.
This yields the contradiction.

\paragraph{Query sets of forward evaluation trees.}
Let $\left(T_w\right)_{w\in\B^m}$ and $\left(D_{w_1,w_2}\right)_{w_1\neq w_2\in\B^m}$ be a candidate depth-$d$ $\Collision_m$-formulation of $\NestedCollision_n$.
For a $\NestedCollision_n$ instance $(f,g)$, let $Q(T_w,f,g)$ denote the partial assignment to entries of $f$ and $g$ revealed along the execution of $T_w$ on $(f,g)$.
Additionally, denote by $Q_f(T_w,f,g)$ the set of all $f$-queries in the domain of $Q(T_w,f,g)$.
The domain of such a partial assignment consists of $f$-locations $x\in\B^n$ and $g$-locations $(x,y,u)\in\B^{2n}\times\B^n$.
We say that a transcript contains a query $x$ or $(x,y,u)$ when the corresponding location lies in its domain.
Unions of transcripts are understood as consistent unions of partial assignments, and $|P|$ denotes the size of the domain of a partial assignment $P$.

Next, we define tainted inputs with respect to a $\NestedCollision$ instance $(f,g)$.

\begin{definition}[tainted inputs]
\label{def:tainted-inputs}
Let $T_w$ be the forward evaluation tree associated with $w\in\B^m$.
We say that $w$ is a \emph{tainted input for $f$ and $g$} if there exist distinct queries $x,y\in Q_f(T_w,f,g)$ such that $f(x)=f(y)$.
We denote by $\Tainted\subseteq\B^m$ the set of all tainted inputs for $f$ and $g$.
\end{definition}

Correspondingly, we call an input \emph{good} if it is not tainted.
Note that, although the defining collision condition refers only to values of $f$, the set $Q_f(T_w,f,g)$ consists of the $f$-queries made along the actual adaptive execution of $T_w$ on the oracle pair $(f,g)$, and, hence, may depend on both $f$ and $g$.
On the other hand, since we fix the $\Collision$ formulation throughout the analysis, while only the functions $(f,g)$ vary, we keep the dependence on $(f,g)$ explicit and suppress the dependence on the forward trees from the notation.

We use the following terminology for partial assignments.
A partial assignment $P$ to entries of the truth tables of $f$ and $g$ \emph{witnesses a full $\NestedCollision_n$ solution} if there exist $x,y,u,v\in\B^n$ with $x\neq y$ and $u\neq v$ such that $P$ assigns values to $f(x)$, $f(y)$, $g(x,y,u)$, and $g(x,y,v)$, and these assigned values satisfy $f(x)=f(y)$ and $g(x,y,u)=g(x,y,v)$.
If no such tuple exists, we call $P$ \emph{non-witnessing}.

Next, we argue that two good inputs do not help in solving the \NestedCollision instance $f,g$.

\begin{lemma}[two good inputs give a non-witnessing transcript]
\label{lem:good-pair-no-solution}
Let $(f,g)$ be any instance of $\NestedCollision_n$.
Assume that the forward part of its $\Collision_m$-formulation is endpoint-revealing.
Let $w_1,w_2\in\B^m$ be a collision for $C^{f,g}$ such that both inputs are good.
Then the joint transcript
\(
Q(T_{w_1},f,g)\cup Q(T_{w_2},f,g)
\)
is non-witnessing.
\end{lemma}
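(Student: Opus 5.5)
The plan is a direct contradiction argument that uses only one of the two $g$-entries in a putative witness. The endpoint-revealing property will force the two $f$-endpoints into the \emph{same} transcript as that $g$-entry, which makes the corresponding input tainted. The hypothesis that $\{w_1,w_2\}$ is a collision for $C^{f,g}$ plays no role; only goodness of both inputs and the endpoint-revealing property are used.

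First, I note that the union $Q(T_{w_1},f,g)\cup Q(T_{w_2},f,g)$ is automatically consistent. Both partial assignments are restrictions of the single total assignment given by $(f,g)$, so every value the union assigns is the true value of $f$ or $g$ at that location.

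Suppose, towards a contradiction, that the joint transcript witnesses a full $\NestedCollision_n$ solution. Then there are $x,y,u,v\in\B^n$ with $x\neq y$ and $u\neq v$ such that the union assigns values to $f(x)$, $f(y)$, $g(x,y,u)$ and $g(x,y,v)$, with $f(x)=f(y)$ and $g(x,y,u)=g(x,y,v)$. In particular the location $(x,y,u)$ lies in the domain of $Q(T_{w_i},f,g)$ for some $i\in\{1,2\}$.

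The execution of $T_{w_i}$ on $(f,g)$ follows a root-to-leaf path of $T_{w_i}$, and that path queries the entry $g(x,y,u)$. By the endpoint-revealing property, the same path also queries $f(x)$ and $f(y)$. Hence $x,y\in Q_f(T_{w_i},f,g)$. These are two distinct queries with $f(x)=f(y)$, so by \Cref{def:tainted-inputs} we get $w_i\in\Tainted$. This contradicts the assumption that $w_i$ is good, so the joint transcript is non-witnessing.

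The only step needing care is the localization. One must take the $f$-endpoints from the transcript of the tree that queried $g(x,y,u)$, not from the union, since $f(x)$ and $f(y)$ could otherwise have been revealed by different trees. The endpoint-revealing normal form of \Cref{lem:endpoint-revealing-normal-form} is exactly what makes this localization possible.
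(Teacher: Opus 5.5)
Your proof is correct and takes essentially the same approach as the paper: argue by contradiction, place one $g$-entry of a putative witness in a single transcript, and use the endpoint-revealing property to put $f(x)$ and $f(y)$ into that same transcript, which makes the corresponding input tainted. Your two side remarks are accurate but do not change the argument: the union of the transcripts is automatically consistent, and the collision hypothesis on $w_1,w_2$ is not used.
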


\begin{proof}
Suppose that the joint transcript witnesses a full $\NestedCollision_n$ solution.
Then there exist $x,y,u,v\in\B^n$ such that
\[
x\neq y,
\qquad
f(x)=f(y),
\qquad
u\neq v,
\qquad
g_{x,y}(u)=g_{x,y}(v).
\]
Moreover, the joint transcript assigns values to $f(x)$, $f(y)$, $g(x,y,u)$, and $g(x,y,v)$.
At least one of the two transcripts contains one of the queries $(x,y,u)$ and $(x,y,v)$.
By the endpoint-revealing property, that transcript also contains the two queries $x$ and $y$ to $f$.
Since $x\neq y$ and $f(x)=f(y)$, this transcript is tainted.
This contradicts the assumption that both inputs are good.
\end{proof}

Now, we proceed to argue that there exists a pair of functions $(f,g)$ that induces a collision between two good inputs.
Our argument uses a probabilistic argument over uniform $\NestedCollision_n$ instances defined below.

\begin{definition}[uniform nested functions]
\label{def:nested-collision-hard-distribution}
Let $\mathcal{O}$ be the distribution of pairs of functions
\[
f\colon \B^n \to \B^{n-1}
\qquad\text{and}\qquad
g\colon \B^{2n}\times\B^n \to \B^{n-1}
\]
obtained by sampling both $f$ and $g$ uniformly and independently.
\end{definition}

Equivalently, under $\mathcal{O}$, for every $z\in\B^{2n}$, the slice $g_z\colon\B^n\to\B^{n-1}$ is a uniformly random function, and these slices are mutually independent and independent of $f$.

Next, we bound the expected number of tainted inputs when $(f,g)$ are sampled from $\mathcal{O}$, denoted $(f,g)\gets\mathcal{O}$.

\begin{lemma}[bounding the number of tainted inputs]
\label{lem:bound-tainted}
Let $n,m,d\in\N$ and fix a depth-$d$ $\Collision_m$-formulation of $\NestedCollision_n$.
Then
\(
\mathbb{E}_{f,g\gets\mathcal{O}}\bigl[|\Tainted|\bigr]
\le
d^2\cdot 2^{m-n}.
\)
\end{lemma}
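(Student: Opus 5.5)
By linearity of expectation, it suffices to show that every fixed $w\in\B^m$ satisfies $\Pr_{(f,g)\gets\mathcal{O}}[w\in\Tainted]\le d^2\cdot 2^{-(n-1)}/2$. Summing over all $2^m$ inputs then gives $d^2\cdot 2^{m-n}$. So fix $w$ and consider the adaptive execution of $T_w$ on a random $(f,g)$. The main subtlety is that the set $Q_f(T_w,f,g)$ is chosen adaptively, depending on answers from both $f$ and $g$. I will therefore not argue about a fixed set of queries. Instead, I will use the principle of deferred decisions: sample the truth-table entries lazily, exactly when the tree queries them.

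Along a root-to-leaf path, the tree makes at most $d$ queries in total, so at most $d$ of them are $f$-queries. Label the distinct $f$-locations queried, in order, as $x^{(1)},x^{(2)},\dots$. Repeated queries to the same location can be ignored, since they add no new location. When the $j$-th new location $x^{(j)}$ is queried, the adaptive choice of $x^{(j)}$ is a function of previously revealed values only. These revealed values consist of the $f$-values at $x^{(1)},\dots,x^{(j-1)}$ and some $g$-values, and all of them are independent of $f(x^{(j)})$. Since $f$ is uniform and independent of $g$, the value $f(x^{(j)})$ is uniform on $\B^{n-1}$ conditioned on the entire history. Hence the probability that it equals any of the $j-1$ earlier values is at most $(j-1)/2^{n-1}$.

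The event $w\in\Tainted$ occurs exactly when such a coincidence happens at some step $j$. A union bound over the steps gives
\[
\Pr[w\in\Tainted]\le\sum_{j=1}^{d}\frac{j-1}{2^{n-1}}=\frac{\binom{d}{2}}{2^{n-1}}\le\frac{d^2}{2^{n}}.
\]
Formally, to make the union bound over a random, adaptively determined number of steps rigorous, I will define the event $E_j$: ``the $j$-th distinct $f$-query exists and its answer collides with an earlier one''. I then bound $\Pr[E_j]$ by conditioning on the transcript up to that query. Padding paths so that every path makes exactly $d$ queries is unnecessary, because the events $E_j$ for nonexistent steps simply do not occur. Summing over $w\in\B^m$ yields $\mathbb{E}\bigl[|\Tainted|\bigr]\le 2^m\cdot d^2/2^n$, as claimed.

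The only real obstacle is the conditioning step: making sure the adaptivity, including the dependence on $g$-answers, does not bias the fresh $f$-value. This is handled by the independence of $f$ and $g$ under $\mathcal{O}$ and by the fact that a fresh location's value has not yet been revealed.
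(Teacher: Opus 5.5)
Your proof is correct and follows the paper's argument. You fix $w$, observe that each fresh $f$-query returns a value that is uniform conditioned on the adaptive history (including the $g$-answers), apply a union bound to get $\binom{d}{2}2^{-(n-1)}\le d^2 2^{-n}$, and sum over the $2^m$ inputs. Your step-by-step events $E_j$ simply make explicit the conditioning that the paper states in one line.
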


\begin{proof}
Fix an input $w\in\B^m$.
The tree $T_w$ makes at most $d$ queries.
Hence, the transcript $Q(T_w,f,g)$ contains at most $d$ revealed values and, in particular, at most $d$ queries to $f$.
Even though the queries are adaptive, every fresh query to $f$ returns an independent uniform value in $\B^{n-1}$ conditioned on the previous transcript.
Since $f$ is a uniform shrinking function, the probability that the transcript contains a collision in $f$ is, by the union bound, at most
\[
\binom{d}{2}2^{-(n-1)}
\le
d^2 2^{-n}.
\]
Summing over all $2^m$ inputs gives
\[
\mathbb{E}_{f,g\gets\mathcal{O}}\bigl[|\Tainted|\bigr]
\le
2^m\cdot d^2 2^{-n}
=
d^2\cdot 2^{m-n}.\qedhere
\]
\end{proof}

Equipped with the above bound, we can prove the existence of a good collision for all large enough $n$.

\begin{lemma}[existence of a good collision]
\label{lem:almost-always-good-collision-exists}
Let $n,m,d\in\N$ and fix a depth-$d$ $\Collision_m$-formulation of $\NestedCollision_n$.
Let $(f,g)\gets\mathcal{O}$.
Except with probability at most
\(
d^2 2^{-(n-1)},
\)
there exists a non-trivial collision $w_1,w_2\in\B^m$ for $C^{f,g}$ such that both $w_1$ and $w_2$ are good.
\end{lemma}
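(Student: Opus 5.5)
We will combine \Cref{lem:bound-tainted} with Markov's inequality and the pigeonhole principle applied to $C^{f,g}$ restricted to the good inputs. The good inputs form a set $G=\B^m\setminus\Tainted$. If $|G|>2^{m-1}$, then $C^{f,g}$ restricted to $G$ maps more than $2^{m-1}$ elements into $\B^{m-1}$. By pigeonhole there are then distinct $w_1,w_2\in G$ with $C^{f,g}(w_1)=C^{f,g}(w_2)$. This is a non-trivial collision in which both inputs are good. It therefore suffices to bound the probability of the complementary event $|\Tainted|\ge 2^{m-1}$.

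By \Cref{lem:bound-tainted}, $\mathbb{E}[|\Tainted|]\le d^2 2^{m-n}$. Since $|\Tainted|$ is a nonnegative random variable, Markov's inequality gives
\[
\Pr_{(f,g)\gets\mathcal{O}}\bigl[|\Tainted|\ge 2^{m-1}\bigr]\le \frac{d^2 2^{m-n}}{2^{m-1}} = d^2 2^{-(n-1)}.
\]
Outside this event, $|G|>2^m-2^{m-1}=2^{m-1}$, and the pigeonhole step applies. This yields exactly the claimed failure probability.

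There is no real obstacle here, but two points need checking. First, the inequality must be strict. Markov bounds the probability that $|\Tainted|\ge 2^{m-1}$, so on the complement $|\Tainted|<2^{m-1}$, hence $|G|\ge 2^{m-1}+1$, which is exactly what pigeonhole into $2^{m-1}$ values needs. Second, goodness is defined with respect to the actual adaptive executions of $T_{w_1}$ and $T_{w_2}$ on the same sampled $(f,g)$, and the collision is for the same induced function $C^{f,g}$. Both the counting and the pigeonhole step therefore refer to one fixed instance, and no further union bound over inputs is needed.
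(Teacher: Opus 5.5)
Your proof is correct and follows the paper's argument exactly. Both apply Markov's inequality to the bound from \Cref{lem:bound-tainted} at threshold $2^{m-1}$, use integrality to get at least $2^{m-1}+1$ good inputs, and then apply the pigeonhole principle to $C^{f,g}$ restricted to those inputs.
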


\begin{proof}
By \Cref{lem:bound-tainted}, we have
\[
\mathbb{E}_{f,g\gets\mathcal{O}}[|\Tainted|]
\le
d^2 2^{m-n}.
\]
By Markov's inequality,
\[
\Pr_{f,g\gets\mathcal{O}}\left[|\Tainted|\ge 2^{m-1}\right]
\le
d^2 2^{m-n}\cdot 2^{-(m-1)}
=
d^2 2^{-(n-1)}.
\]
Therefore, with probability at least
\(
1-d^2 2^{-(n-1)},
\)
we have $|\Tainted|<2^{m-1}$.
Since $|\Tainted|$ is an integer, this implies $|\Tainted|\le 2^{m-1}-1$.
Thus, the set of good inputs has size at least $2^m-(2^{m-1}-1)=2^{m-1}+1$.
Since
\(
C^{f,g}\colon\B^m\to\B^{m-1}
\)
is shrinking, the pigeonhole principle implies that two good inputs collide.
\end{proof}

\begin{lemma}[lower bound for any non-witnessing partial assignment]
\label{lem:restricted-lb}
Let $P$ be a partial assignment to entries of the truth tables of $f$ and $g$.
Assume that $P$ is non-witnessing.
Let $A$ be a decision tree of depth at most $d$ querying entries of the truth tables of $f$ and $g$.
If $n\ge 2$ and $|P|+d \le 2^{n-1}$, then there exists a completion of $P$ on which the output of $A$ is not a valid $\NestedCollision_n$ solution.
\end{lemma}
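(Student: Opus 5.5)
We will run $A$ adversarially against $P$: whenever $A$ queries a location already in the domain of $P$, we answer according to $P$; whenever $A$ queries a fresh location, we choose its value so that the accumulated partial assignment stays non-witnessing. At the end, $A$ outputs some $(x,y,u,v)$. We then extend the current partial assignment $P'$ (of size at most $|P|+d$) to a full instance on which $(x,y,u,v)$ is not a solution. Throughout we maintain a stronger invariant that makes both steps easy: \emph{the current partial assignment has no collision among the assigned $f$-values except those already present in $P$, and no collision among assigned values of any slice $g_{x,y}$ except those already present in $P$}. Call such a fresh value choice \emph{injective}: a fresh $f$-query at $x$ receives a value in $\B^{n-1}$ different from all $f$-values currently assigned, and a fresh $g$-query at $(x,y,u)$ receives a value different from all currently assigned values of the slice $g_{x,y}$. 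Since at most $|P|+d-1 < 2^{n-1}$ locations are assigned at any time, such values always exist.

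First we check that injective answers keep the assignment non-witnessing. Suppose after some fresh answer there is a witnessing tuple $(x,y,u,v)$. If the fresh location is an $f$-entry, the new $f$-collision $f(x)=f(y)$ cannot involve it, so the $f$-pair was already a collision before; the $g$-pair likewise was already present, contradicting that the previous assignment was non-witnessing. If the fresh location is a $g$-entry in slice $(x,y)$, the slice collision cannot involve it, so again the tuple was witnessed earlier. By induction from $P$, the final $P'$ is non-witnessing.

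Next, with $(x,y,u,v)$ the output of $A$, we complete $P'$. If $x=y$ or $u=v$, any completion works. Otherwise $P'$ does not witness the tuple, so one of $f(x),f(y),g(x,y,u),g(x,y,v)$ is unassigned, or all four are assigned but $f(x)\neq f(y)$ or $g_{x,y}(u)\neq g_{x,y}(v)$, in which case any completion defeats $A$. In the unassigned case, we set the missing entry to break the corresponding equality: if $f(x)$ is unassigned, we give it a value different from $f(y)$ (choosing $f(y)$ arbitrarily first if it is also unassigned), which is possible since $|\B^{n-1}|\ge 2$ for $n\ge 2$; we handle $g(x,y,u)$ and $g(x,y,v)$ symmetrically. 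We then fill all remaining entries arbitrarily. The resulting completion is consistent with $P'$, hence with $A$'s execution path, so $A$ outputs the same tuple, which is invalid.

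The main subtlety is the invariant in the first step: merely keeping the assignment non-witnessing greedily could fail if a fresh answer were forced to complete a solution. Answering injectively avoids this, and the budget $|P|+d\le 2^{n-1}$ is exactly what guarantees an unused value in $\B^{n-1}$ at every step. The final step needs only $n\ge 2$.
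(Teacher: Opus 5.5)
Your proposal is correct and follows essentially the same route as the paper. Both answer fresh $f$-queries, and fresh queries within each slice $g_z$, with previously unused values, which is possible because at most $|P|+d-1\le 2^{n-1}-1$ values are excluded; this keeps the transcript non-witnessing, and both then complete it by breaking one unforced equality, which uses $n\ge 2$. The only difference is that you spell out the induction showing the non-witnessing property is preserved, which the paper states more briefly.
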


\begin{proof}
We answer the queries of $A$ adversarially while extending the partial assignment $P$.
Queries to already fixed entries are answered consistently with the current partial assignment.

Whenever $A$ queries a fresh $f$-input $a\in\B^n$, we answer with a value in $\B^{n-1}$ that has not previously been used on any queried $f$-input.
Such a value exists: apart from $a$ itself, at most $|P|+d-1$ entries of $f$ are fixed at this point, so at most $|P|+d-1\le 2^{n-1}-1$ values of $\B^{n-1}$ are excluded and at least one remains free.

Whenever $A$ queries a fresh $g$-input $(z,u)\in\B^{2n}\times\B^n$, we answer with a value in $\B^{n-1}$ that has not previously been used on any queried point of the same slice $g_z$.
Again such a value exists, since apart from $(z,u)$ at most $|P|+d-1\le 2^{n-1}-1$ points of the slice $g_z$ are fixed, leaving at least one free value of $\B^{n-1}$.
This condition on the slices is still needed because $P$ is only assumed to be non-witnessing.
It may already contain an $f$-collision, and creating a new collision in the corresponding slice $g_z$ could then create a full $\NestedCollision_n$ witness.

Thus, no answer given during the adversarial simulation creates an $f$-collision involving a freshly answered $f$-query, and no answer creates a collision in any slice $g_z$ involving a freshly answered $g$-query.
Existing collisions already present in $P$ are allowed.

In particular, if the final transcript is denoted by $P'$, then $P'$ is still non-witnessing.
Indeed, any full witness in $P'$ was not already present in $P$, since $P$ is non-witnessing.
Hence, at least one of the two required collisions would have to be newly created during the simulation, contradicting the way fresh $f$- and $g$-queries were answered.

Let $A(P')=(x,y,u,v)$ be the tuple output at the reached leaf, where $P'\supseteq P$ denotes the final transcript.
If it is invalid for a trivial reason, e.g.\ $x=y$ or $u=v$, we are done; so assume $x\neq y$ and $u\neq v$.
Since $P'$ is non-witnessing, at least one of the two equalities $f(x)=f(y)$ and $g(x,y,u)=g(x,y,v)$ is not forced by $P'$.
Fix one such unforced equality and call the two entries it compares the \emph{relevant} entries.
If both relevant entries are already revealed in $P'$, then, since the equality is not forced, they are unequal and the output is already invalid.
Otherwise at least one relevant entry is still undefined.
If exactly one relevant entry is undefined, we assign it any value distinct from the value of the already defined partner entry.
If both relevant entries are undefined, we assign them two distinct values.
In either case this is possible because $n\ge2$, and hence $\B^{n-1}$ contains at least two values; in contrast to the run, these final values need not avoid collisions, as no further queries are answered.
The chosen completion makes the relevant equality fail, so $(x,y,u,v)$ is not a valid $\NestedCollision_n$ solution.
All remaining undefined entries are completed arbitrarily, yielding a full completion of $P$ that falsifies the output of $A$.
\end{proof}

\begin{proof}[Proof of \Cref{thm:NestedCollision-vs-Collision}]
Fix polynomially bounded functions $d,m\colon\N\to\N$.
We show that, for all sufficiently large $n$, no such formulation exists.
Fix such an $n$ and suppose to the contrary that there exists a depth-$d(n)$ $\Collision_{m(n)}$-formulation of $\NestedCollision_n$.
For readability, write $d=d(n)$ and $m=m(n)$.
Let $\Delta=3d$.
By \Cref{lem:endpoint-revealing-normal-form}, we obtain an endpoint-revealing depth-$\Delta$ $\Collision_m$-formulation of $\NestedCollision_n$ computing the same induced \Collision instance.
We work with this normalized formulation from now on.
Let $C^{f,g}\colon\B^m\to\B^{m-1}$ be the shrinking map computed by the formulation on $(f,g)$.
For each $w\in\B^m$, let $T_w$ be the forward evaluation tree computing $C^{f,g}(w)$.

By \Cref{lem:almost-always-good-collision-exists}, except with probability at most $\Delta^2 2^{-(n-1)}$ over $(f,g)\gets\mathcal{O}$, there exists a collision $w_1,w_2\in\B^m$ for $C^{f,g}$ such that both inputs are good.
For all sufficiently large $n$, this probability is strictly smaller than $1$.
Fix an instance $(f,g)$ for which such a good collision $w_1,w_2$ exists.
Let
\[
P:=Q(T_{w_1},f,g)\cup Q(T_{w_2},f,g)
\]
be the joint transcript of the two forward evaluation trees on this instance.
The partial assignment $P$ has size at most $2\Delta$.
Moreover, by \Cref{lem:good-pair-no-solution}, $P$ is non-witnessing.

Note that $w_1\neq w_2$ and $C^{f',g'}(w_1)=C^{f',g'}(w_2)$ hold for every completion $(f',g')$ extending $P$.
Indeed, the forward trees $T_{w_1}$ and $T_{w_2}$ follow the same root-to-leaf paths on every such completion, because all queried entries along those paths are fixed by $P$.
Therefore, by correctness of the formulation, the backward part tree $D_{w_1,w_2}$ must output a valid $\NestedCollision_n$ solution on every completion of $P$.
The backward tree was not changed by the normalization, and, hence, has depth at most $d\le\Delta$.
Since $\Delta\in\poly(n)$, for all sufficiently large $n$ we have $n\ge 2$ and $|P|+\Delta\le 3\Delta\le 2^{n-1}$.
Applying \Cref{lem:restricted-lb} to the partial assignment $P$ and the decision tree $D_{w_1,w_2}$ with depth parameter $\Delta$, we conclude that there exists a completion $(f',g')$ of $P$ on which $D_{w_1,w_2}$ outputs an invalid solution for the $\NestedCollision_n$ instance $(f',g')$.
This contradicts the correctness of the formulation.
Therefore, no such depth-$d$ \Collision-formulation exists.
\end{proof}

As observed in \Cref{sec:Nested-Collision}, the \NestedCollision problem is contained in the adaptive Turing closure of \PWPP.
On the other hand, viewing black-box many-one reductions in the standard way as polynomial-depth decision-tree formulations, \Cref{thm:NestedCollision-vs-Collision} shows that \NestedCollision is not black-box many-one reducible to \Collision.
Since \PWPP is closed under \emph{non-adaptive} Turing reductions \citep{Jerabek16}, and this closure is established by a black-box construction combining many non-adaptive \Collision instances into a single one, it follows that \NestedCollision is not black-box many-one reducible to the \emph{non-adaptive} Turing closure of \PWPP.
This establishes our main result.

\begin{corollary}
\label{cor:pwpp-not-turing-closed}
\PWPP is not closed under adaptive Turing reductions in the black-box setting.
\end{corollary}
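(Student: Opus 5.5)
The plan is to argue by contradiction, combining \Cref{thm:NestedCollision-vs-Collision} with the non-adaptive Turing closure of \PWPP due to \cite{Jerabek16}. By the Observation in \Cref{sec:Nested-Collision}, \NestedCollision is solved by two adaptive calls to a \Collision oracle. Moreover, this reduction is black-box: the first query is the circuit $f$ itself, and the second query is the circuit $g_{x,y}$, obtained from $g$ by hardwiring $(x,y)$. Each such query evaluates $f$ or $g$ at a single point. So \NestedCollision lies in the black-box adaptive Turing closure of \PWPP. It therefore suffices to show that \NestedCollision is not in the black-box \PWPP, i.e., that it has no black-box many-one reduction to \Collision.

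Suppose, toward a contradiction, that the black-box \PWPP were closed under adaptive Turing reductions. Then \NestedCollision would admit a black-box many-one reduction to \Collision. I would fix the standard translation between black-box many-one reductions and decision-tree formulations. A reduction running in polynomial time and making polynomially many oracle queries yields the following for each $n$:
\begin{itemize}
\item an instance size $m(n)$;
\item for each $w\in\B^m$, a forward tree $T_w$ that simulates the computation of the output circuit on input $w$, querying $f$ and $g$ only where the reduction does;
\item for each pair $w_1\neq w_2$, a backward tree $D_{w_1,w_2}$ that simulates the solution-mapping procedure on the witness $(w_1,w_2)$.
\end{itemize}
The depth of all these trees is bounded by the running time, so $d(n)$ is polynomially bounded. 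Correctness of the reduction on every instance, and for every valid witness, is exactly the correctness condition of \Cref{def:collision-formulation-of-nested-collision}. Hence $\NestedCollision_n$ would have a depth-$d(n)$ $\Collision_{m(n)}$-formulation for every $n$, contradicting \Cref{thm:NestedCollision-vs-Collision} for all sufficiently large $n$.

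To frame the result as a separation from the non-adaptive closure, I would also invoke Jeřábek's result. His construction combining many non-adaptive \Collision queries into one is itself black-box: it composes the query circuits and decodes a single collision into collisions for each of them. So any black-box non-adaptive Turing reduction to \Collision collapses to a black-box many-one reduction, and the same contradiction applies. Thus the adaptive closure strictly contains the non-adaptive one, which coincides with \PWPP.

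The main obstacle is making the translation step rigorous. In particular, one must check that the forward trees output the whole value $C^{f,g}(w)\in\B^{m-1}$ and do so with polynomially bounded depth, and that the backward part may depend on $w_1,w_2$ while still querying the oracle. I would handle this by fixing the standard black-box model, in which oracle access is by evaluation queries and time bounds bound the number of queries, and then reading the trees off the computation directly.
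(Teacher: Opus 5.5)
Your proposal is correct and follows the paper's own argument. \NestedCollision lies in the black-box adaptive Turing closure via the two-query Observation, while \Cref{thm:NestedCollision-vs-Collision}, read through the standard translation of black-box many-one reductions into polynomial-depth decision-tree formulations, rules out a black-box many-one reduction to \Collision; the black-box nature of the non-adaptive closure result then frames this as a strict separation from the non-adaptive closure. The only difference is that you spell out the translation to \Collision-formulations (the forward trees $T_w$, the backward trees $D_{w_1,w_2}$, and the polynomial depth bound), which the paper simply invokes as standard.
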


\section*{Declaration of generative AI and AI-assisted technologies in the manuscript preparation process}

During the preparation of this work, the author used Bolzano \citep{Bolzano26}, an LLM-based multi-agent system for mathematical research, to assist in exploring candidate black-box separations and in drafting an early version of the proof of the main theorem. This statement also records the extent of the tool-assisted research interaction.

The author asked Bolzano to search for a total search problem separating \PWPP from its adaptive Turing closure in the black-box setting. In this interaction, Bolzano proposed the \NestedCollision problem as a candidate. The lower-bound proof initially produced by Bolzano was incorrect. The author then supplied the high-level strategy for the lower bound: to show that, with high probability over a random instance, the induced \Collision instance has a valid witness whose query transcript does not reveal enough information to construct a \NestedCollision solution. Bolzano produced a draft proof following this strategy, using a counting argument based on Markov's inequality.

The final manuscript was written, checked, restructured, and rewritten by the author. In particular, after a colleague identified a gap in an earlier random-oracle framing, the author reformulated the result directly in the decision-tree model, the standard setting for black-box separations in \TFNP. Further simplifications were incorporated in response to reviewer feedback. The remaining contributions from the Bolzano-assisted interaction are the formulation of the \NestedCollision problem (\Cref{def:Nested-Collision}) and the probabilistic structure used to bound the number of tainted inputs (\Cref{lem:bound-tainted,lem:almost-always-good-collision-exists}).

The author independently reviewed and verified all AI-assisted material, checked the relevant sources and attributions, and thoroughly edited, rewrote, and adapted the material. The manuscript represents the author's authentic and original contribution and reflects the author's own analysis, interpretation, insights, and ideas. The author takes full responsibility for the correctness, originality, and content of the manuscript. Bolzano is not an author and bears no responsibility for the manuscript. A detailed description of the Bolzano system and an account of this interaction, including a link to the research transcript, can be found in \citep{BalkoGHKKRSZ26}.

\section*{Acknowledgments}

I am grateful to Pavel Pudlák and Neil Thapen for insightful discussions on the adaptive Turing closure of the complexity classes \PPP and \PWPP, and for their helpful comments on a draft of this manuscript.
I am also grateful to Vašek Rozhoň for providing access to the LLM-based service Bolzano and letting me experiment with it.
I sincerely thank Weiqiang Yuan for identifying a subtle gap related to the black-box reduction formalism in a prior version of this manuscript.
Finally, I wish to thank the anonymous reviewers for their constructive feedback on the presentation of the result.

\emph{Funding:} This work was partially supported by the Academy of Sciences of the Czech Republic (RVO 67985840) and the Czech Science Foundation GAČR grant No. 25-16311S.

\printcredits

\bibliographystyle{cas-model2-names}
\bibliography{bibliography}
\end{document}